\documentclass[12pt]{article}
\usepackage{amssymb,amsmath,amsfonts,amsthm,lscape,xcolor,enumerate}
\usepackage[color,all]{xy}

\tolerance=500 \textwidth18.5cm \textheight22.5cm \hoffset-1.2cm\voffset-1.5cm

\oddsidemargin= 1.25cm
\evensidemargin= 1.25cm

\parskip=1ex
  \usepackage{paralist}
  \usepackage{graphics} 
  \usepackage{epsfig} 
  \usepackage{graphicx}  \usepackage{epstopdf}
 \usepackage[colorlinks=true]{hyperref}
 
\hypersetup{urlcolor=blue, citecolor=red}
\newcommand{\sbt}{\,\begin{picture}(-1,1)(-1,-3)\circle*{2}\end{picture}\ }

\usepackage[all]{xy}

\usepackage{amsfonts}
\usepackage{amssymb}
\usepackage{amsmath}
\usepackage{amsthm}

  \textheight=8.2 true in
   \textwidth=6.0 true in
    \topmargin 30pt
     \setcounter{page}{1}

\newtheorem{theorem}{Theorem}[section]

\newtheorem{lemma}[theorem]{Lemma}
\newtheorem{proposition}{Proposition}

\theoremstyle{definition}
\newtheorem{definition}[theorem]{Definition}

\newtheorem{example}[theorem]{Example}

\begin{document}

\thispagestyle{empty}

\centerline{\Large \bf Jacobi--Lie systems: Fundamentals}

\bigskip

\centerline{\Large \bf and low-dimensional classification}

\medskip
\medskip
\medskip

\medskip
\medskip

\centerline{\scshape F.J. Herranz}
\medskip
{\footnotesize
 \centerline{Department of Physics, University of Burgos,}
   \centerline{09001, Burgos, Spain.}

} 
\medskip

\centerline{\scshape J. de Lucas}
\medskip
{\footnotesize
 \centerline{Department of Mathematical Methods in Physics, University of Warsaw,}
   \centerline{ul. Pasteura 5, 02-093, Warszawa, Poland.}

}

\medskip

{\centerline{\scshape C. Sard\'on\footnote{Based on the contribution presented at ``The 10th AIMS Conference on Dynamical Systems, Differential Equations and Applications", July 07--11, 2014, Madrid, Spain. To appear in the Proceedings of the 10th AIMS Conference.}
}}\medskip
{\footnotesize
 \centerline{Department of Fundamental Physics, University of
Salamanca,}
   \centerline{Plza. de la Merced s/n, 37.008, Salamanca, Spain.}

}

\bigskip

\begin{abstract}
\noindent
A {\it Lie system} is a system of differential equations describing the integral curves of a $t$-dependent vector field taking values in a finite-dimensional real Lie algebra of vector fields, a {\it Vessiot--Guldberg Lie algebra}. We define and analyze Lie systems possessing a Vessiot--Guldberg Lie algebra of Hamiltonian vector fields relative to a Jacobi manifold, the hereafter called {\it Jacobi--Lie systems}. We classify Jacobi--Lie systems on $\mathbb{R}$ and $\mathbb{R}^2$. Our results shall be illustrated through examples of physical and mathematical interest.
\end{abstract}

\bigskip

\noindent
{\bf MSC class}: 34A26; Secondary: 53B50.
\bigskip

\noindent
 {\bf Keywords}: Jacobi manifold, Lie systems, Reeb vector field, superposition rule, Vessiot--Guldberg Lie algebra.

\newpage

\section{Introduction}

The inspection of Lie systems traces back to Lie, who laid down the fundamentals on the study of systems of first-order ordinary differential
equations admitting {\it superposition rules}, i.e., functions describing their general solutions in terms of a finite generic family of particular solutions and some parameters \cite{LS}.

Subsequently, Lie systems were hardly ever investigated for almost a century until Winternitz retook their analysis \cite{Dissertationes,PW}. Since then, many authors have been studying them \cite{ADR11,BCHLS13Ham,CGM00,CLS12Ham,Fi14}.
Although Lie systems rarely occur in the physics and mathematics literature, they appear in relevant problems and enjoy interesting geometric properties, which motivate their study \cite{Dissertationes,CLS12Ham,CV14,LV14}.

Some attention has been paid to Lie systems admitting a Vessiot--Guldberg Lie algebra (VG Lie algebra) of Hamiltonian vector fields with respect to a geometric structure \cite{ADR11,BCHLS13Ham,CGLS,CGM00,CLS12Ham,Ru10,LV14}.
First, Lie systems with VG Lie algebras of Hamiltonian vector fields relative to
symplectic and Poisson structures were briefly studied in \cite{CGM00}. Lie systems
admitting a VG Lie algebra of Hamiltonian vector fields relative to a Poisson bivector were posteriorly dubbed as {\it Lie--Hamilton systems} and carefully analyzed in \cite{CLS12Ham}, where many of their applications can be found. Next, Lie systems with VG Lie algebras of Hamiltonian vector fields relative to Dirac structures \cite{CGLS} and $k$-symplectic structures \cite{LV14} were investigated. All these geometries allow one to obtain superposition rules, constants of motion and Lie symmetries for Lie systems by means of algebraic and geometric methods, e.g., the superposition rule for Riccati equations can be obtained via the Casimir element of $\mathfrak{sl}(2,\mathbb{R})$ \cite{BCHLS13Ham}.

Following the above research, we now study Lie systems with VG Lie algebras of Hamiltonian vector fields with respect to Jacobi manifolds. Roughly speaking, a {\it Jacobi manifold} is a manifold $N$ endowed with a local Lie algebra $(C^\infty(N),\{\cdot,\cdot\})$ \cite{Co87,Do87,Ki76,LM87,Ma91,Ry00,IV}. Since Poisson manifolds are a particular case of Jacobi manifolds, Jacobi--Lie systems are a generalization of Lie--Hamilton systems. For instance, we show that Riccati equations on $\mathbb{R}$  are not Lie--Hamilton systems but they  are   Jacobi--Lie systems. The main difference between Jacobi--Lie systems and Lie--Hamilton systems is that Jacobi manifolds do not naturally give rise to Poisson brackets on a space of smooth functions on the manifold, which makes difficult to prove analogues and/or extensions of the results for Lie--Hamilton systems.

Although each generic Jacobi manifold leads to an associated Dirac manifold, not all Hamiltonian vector fields with respect to the Jacobi manifold become Hamiltonian with respect to its associated Dirac manifold (cf.~\cite{Co87}). Hence, not every Jacobi--Lie system can straightforwardly be understood as a Dirac--Lie system. Even in that case, the Jacobi manifold allows us to construct a Dirac manifold to study the system. Indeed, Dirac--Lie systems were determined through presymplectic and Poisson structures (cf.~\cite{CGLS}). Now Jacobi structures can also accomplish this task.

We here extend to Jacobi--Lie systems some of the main structures found for Lie--Hamilton systems, e.g., Lie--Hamiltonian structures \cite{CLS12Ham}, and we classify Jacobi--Lie systems on $\mathbb{R}$ and $\mathbb{R}^2$ by determining all VG Lie algebras of Hamiltonian vector fields with respect to Jacobi manifolds on $\mathbb{R}$ and $\mathbb{R}^2$. This is achieved by using the local classification of Lie algebras of vector fields on $\mathbb{R}$ and $\mathbb{R}^2$ derived  by Lie \cite{1880} and improved by  Gonz{\'a}lez-L{\'o}pez,  Kamran and   Olver (GKO) \cite{GKP92} (see also~\cite{LH2013}). As a result, we
obtain that every Lie system on $\mathbb{R}$ is a Jacobi--Lie system and, furthermore, we also show that every Jacobi--Lie system on $\mathbb{R}^2$ admits a VG Lie algebra diffeomorphic to one of the 14 classes indicated in Table \ref{table1} below.

\section{Fundamentals on Lie--Hamilton systems}
 All structures throughout this work are   assumed to be smooth, real and globally defined.
Let $V$ be a vector space and $[\cdot,\cdot]:V\times V\rightarrow V$
a Lie bracket. We denote by $(V, [ \cdot, \cdot])$ the corresponding real Lie algebra.
 Given subsets $\mathcal{A}_1,\mathcal{A}_2
 \subset V$, we write $[ \mathcal{A}_1, \mathcal{A}_2]$ for the real
vector space spanned by the Lie brackets between elements of $\mathcal{A}_1$ and
$\mathcal{A}_2$, respectively. We define Lie$(\mathcal{A}_1, V, [ \cdot,
\cdot])$ to be the smallest Lie subalgebra of $V$ containing $\mathcal{A}_1$.
To abbreviate, we use Lie$( \mathcal{A}_1)$ and $V$ to represent
Lie$( \mathcal{A}_1, V, [ \cdot, \cdot])$ and $(V, [ \cdot, \cdot])$, correspondingly.

A {\it $t$-dependent vector field} on $N$ is a map $X : (t, x)
\in
\mathbb{R} \times N \mapsto X (t, x) \in TN$ such that $X_t:x\in N\mapsto X(t,x)\in TN$ is a vector field for
each $t\in\mathbb{R}$
{\cite{Dissertationes}}.
We call {\it minimal Lie algebra} of $X$  the
Lie algebra $V^X \equiv {\rm Lie} (\{X_t \}_{t \in \mathbb{R}})$. An {\it integral curve} of $X$ is an integral curve  $\gamma:
\mathbb{R} \mapsto \mathbb{R} \times N$ of the vector field $\partial / \partial t+X (t, x)$ on $\mathbb{R} \times
N$. For every $\gamma$ of the form $t\mapsto (t,x(t))$, we have that
\begin{equation}\label{normalform}\nonumber
\frac{{\rm d}x}{{\rm d} t} ( t) = (X \circ \gamma) (
   t) .
\end{equation}
This system is referred to as the {\it associated system} of $X$.
Conversely, every system of first-order differential equations in the normal form
describes the integral curves $(t,x(t))$ of a unique $t$-dependent vector field $X$.
So, we can use $X$ to denote both the $t$-dependent vector field and the associated system.

\begin{definition}
A {\it Lie system} is a system $X$ whose $V^X$ is finite-dimensional \cite{Dissertationes}.
\end{definition}
Note that if $X$ admits a VG Lie algebra, then $V^X$ is finite-dimensional and conversely.
\begin{example}
Consider the following system of Riccati equations \cite{Dissertationes}
\begin{equation}\label{coupRiceq}
\frac{{\rm d}x_i}{{\rm d}t}=a_0(t)+a_1(t)x_i+a_2(t)x_i^2,\qquad i=1,\ldots,n,
\end{equation}
with $a_0(t),a_1(t),a_2(t)$ being arbitrary $t$-dependent functions.
System \eqref{coupRiceq} is associated to the $t$-dependent vector field $X_{R}=a_0(t)X_1+a_1(t)X_2+a_2(t)X_3,$
where
\begin{equation}\label{VF}
X_1=\sum_{i=1}^n\partial_{x_i},\qquad X_2=\sum_{i=1}^nx_i\partial_{x_i},\qquad X_3=\sum_{i=1}^nx_i^2{\partial_{x_i}}.
\end{equation}
Hence, $X_{R}$ takes values in the VG Lie algebra $\langle X_1,X_2,X_3\rangle \simeq \mathfrak{sl}(2,\mathbb{R})$ and becomes a Lie system. Lie proved that each Lie system on $\mathbb{R}$ is locally diffeomorphic to a particular case of (\ref{coupRiceq}) for $n=1$ \cite{Dissertationes,1880}.
\end{example}

Let $\Gamma(\Lambda^2 TN)$ be the space of sections of $\Lambda^2TN$. A {\it Poisson manifold} is a pair $\left( N, \Lambda \right)$,
with $\Lambda$ being a bivector field on $N$, i.e., $\Lambda\in \Gamma(\Lambda^2 TN)$, satisfying that $[\Lambda,\Lambda]_{SN}=0$, where $[\cdot,\cdot]_{SN}$ is the {\it Schouten--Nijenhius bracket}\footnote{\tiny{$[X_1\wedge\ldots\wedge X_p,Y_1\wedge\ldots\wedge Y_q]_{SN}\equiv(-1)^{p+1}\sum^{1\leq i\leq p}_{1\leq j\leq q}(-1)^{i+j}[X_i,Y_j]\wedge X_1\wedge\stackrel{\widehat X_i}{\ldots} \wedge X_p\wedge Y_1\wedge\stackrel{\widehat Y_j}{\ldots}\wedge Y_q$.}}\cite[p.7]{IV}. The bivector $\Lambda$, the so-called {\it Poisson bivector}, induces a bundle morphism $\widehat \Lambda:\alpha_x \in T^*N\mapsto \Lambda_x(\alpha_x,\cdot)\in TN$.
We say that a vector field $X$ on $N$ is   Hamiltonian with respect to $(N,\Lambda)$ if 
 $X=\widehat \Lambda ({\rm d}f)$ for an $f\in C^\infty(N)$. We call $f$ a {\it Hamiltonian function} for $X$.
Conversely, every function $f$ is the Hamiltonian function of a unique vector field $X_f$, its {\it Hamiltonian vector field}.
This gives rise to the bracket
 $\{\cdot,\cdot \}_\Lambda:(f,g)\in C^{\infty} \left( N \right)\times C^\infty(N)\mapsto \{f,g\}_\Lambda\equiv\Lambda({\rm d}f,{\rm d}g)\in C^\infty(N)$. This bracket, the {\it Poisson bracket},  turns $C^\infty(N)$ into a {\it Poisson algebra} $(C^\infty(N),\sbt\,,\{\cdot,\cdot\}_\Lambda)$, i.e.,
$\{\cdot, \cdot \}_\Lambda$ is a Lie bracket on $C^\infty(N)$ and $\{f,g \sbt\, h\}_\Lambda = \{ f, g\}_\Lambda \sbt\ h + g \sbt\ \{ f, h\}_\Lambda$, $\forall f, g, h \in C^\infty(N).$ Every Poisson bracket on $C^\infty(N)$ amounts to a Poisson bivector on $N$ \cite{IV}.

\begin{definition}
A {\it Lie--Hamilton system} is a Lie system $X$ whose
$V^X$ consists of Hamiltonian vector fields with respect to a Poisson bivector \cite{CLS12Ham}.
\end{definition}

\begin{example}
Let us reconsider the Lie system (\ref{coupRiceq}) for $n=4$ and
\begin{equation*}
\Lambda_R=(x_1-x_2)^2{\partial_{x_1}}\wedge \partial_{x_2}+(x_3-x_4)^2{\partial_{ x_3}}\wedge {\partial_{ x_4}},
\end{equation*}
which satisfies $[\Lambda_R,\Lambda_R]_{SN}=0$ on  $\mathcal{O}=\{(x_1,x_2,x_3,x_4)|(x_1-x_2)(x_3-x_4)\neq 0\}\subset \mathbb{R}^4$.
We have that $X_i=\widehat\Lambda_R({\rm d}h_i)$ on $(\mathcal{O},\Lambda_R)$ for $X_1,X_2,X_3$ given in \eqref{VF} and
{\begin{footnotesize}
\[
\begin{gathered}
{h_1}=-\frac{1}{x_1-x_2}-\frac{1}{x_3-x_4},\quad
h_2=-\frac 12\left(\frac{x_1+x_2}{x_1-x_2}+\frac{x_3+x_4}{x_3-x_4}\right),\quad
h_3=-\frac{x_1 x_2}{x_1-x_2}-\frac{x_3 x_4}{x_3-x_4}.
\end{gathered}
\]
\end{footnotesize}}Hence, $X_1,X_2,X_3$ are Hamiltonian relative to $(\mathcal{O},\Lambda_{R})$ and $X_{R}$ is a Lie--Hamilton system. This fact can be used to derive the superposition rule for Riccati equations through a Casimir of $\mathfrak{sl}(2,\mathbb{R})$ \cite{BCHLS13Ham}.
\end{example}
\section{Jacobi manifolds}

Jacobi manifolds were introduced by Kirillov and Lichnerowicz \cite{Ki76,Li77}. We now briefly survey their most fundamental properties \cite{Ma91,IV}.

\begin{definition}A {\it Jacobi manifold} is a triple $(N,\Lambda,R)$, where $\Lambda$ is a bivector field on $N$ and $R$ is a vector field on $N$, the referred to as {\it Reeb vector field}, satisfying
$$
[\Lambda,\Lambda]_{SN}=2R\wedge \Lambda,\qquad [R,\Lambda]_{SN}=0.
$$
\end{definition}

\begin{example} Every Poisson manifold $(N,\Lambda)$ can be considered as a Jacobi manifold $(N,\Lambda,R=0)$.
\end{example}

\begin{example}\label{BivectorRH}
The {\it continuous Heisenberg group} \cite{We00} can
be described as the space of matrices
\begin{equation}\label{Hei}
\mathbb{H}=\left\{\left(\begin{array}{ccc}1&x&z\\0&1&y\\0&0&1\end{array}
\right)\bigg|\,x,y,z\in\mathbb{R}\right\},
\end{equation}
endowed with the standard matrix multiplication,
where $\{x,y,z\}$ is the natural coordinate system on $\mathbb{H}$ induced by (\ref{Hei}). Consider the bivector field on $\mathbb{H}$ given by
\begin{equation}\label{BivectorH}
\Lambda_\mathbb{H}\equiv -y\partial_y\wedge\partial_ z+\partial_x\wedge\partial_y
\end{equation}
and the vector field $R_\mathbb{H}\equiv \partial_z$. After a simple calculation, we obtain that
$$
[\Lambda_\mathbb{H},\Lambda_\mathbb{H}]_{SN}=2\partial_ x\wedge \partial_ y\wedge\partial_ z= 2R_\mathbb{H}\wedge \Lambda_\mathbb{H},\qquad [R_\mathbb{H},\Lambda_\mathbb{H}]_{SN}=0.
$$
So, $(\mathbb{H},\Lambda_\mathbb{H},R_\mathbb{H})$ is a Jacobi manifold.
\end{example}
\begin{definition} We say that $X$ is a {\it Hamiltonian
vector field} with respect to the Jacobi manifold $(N,\Lambda,R)$ if there exists a function $f\in C^\infty(N)$ such
that
$$
X=[\Lambda,f]_{SN}+fR=\widehat\Lambda({\rm d}f)+fR.
$$
We say that $f$ is a {\it Hamiltonian function} of $X$ and we write $X=X_f$. If $Rf=0$, we call $f$ 
a {\it good Hamiltonian function} and $X_f$ a {\it good Hamiltonian vector field}.
\end{definition}
\begin{example} Given the Jacobi manifold $(\mathbb{H},\Lambda_\mathbb{H},R_\mathbb{H})$ and $X_1^L\equiv \partial_ x$, we have
$$
X_1^L=[\Lambda_\mathbb{H},-y]_{SN}-yR_\mathbb{H}=\widehat{\Lambda}_\mathbb{H}(-{\rm d}y)-yR_\mathbb{H}.
$$
Hence, $h_1^{L}=-y$ is a Hamiltonian function for $X_1^L$ with respect to $(\mathbb{H},\Lambda_\mathbb{H},R_\mathbb{H})$.
\end{example}
Each function gives rise to a unique Hamiltonian vector field. Nevertheless, a vector field may admit several Hamiltonian functions. This fact will be illustrated in Lemma \ref{63}. We write ${\rm Ham}(N,\Lambda,R)$ for the space of Hamiltonian vector fields of $(N,\Lambda,R)$. It is well known that ${\rm Ham}(N,\Lambda,R)$ is a Lie algebra relative to the Lie bracket of vector fields, and $(C^\infty(N),\{\cdot,\cdot\}_{\Lambda,R})$ becomes a local Lie algebra with
$$
\{f,g\}_{\Lambda,R}=\Lambda({\rm d}f,{\rm d}g)+f Rg-gRf.
$$
This Lie bracket becomes a Poisson bracket if and only if $R=0$. Moreover, the morphism $\phi_{\Lambda,R}:f\in C^\infty(N)\mapsto X_f\in {\rm Ham}(N,\Lambda,R)$ is a Lie algebra morphism.  

\section{Jacobi--Lie systems}
We now
introduce Jacobi--Lie systems as Lie systems admitting a VG Lie
algebra of Hamiltonian vector fields relative to a Jacobi manifold.

\begin{definition} A {\it Jacobi--Lie system} $(N,\Lambda,R,X)$ consists of a
Jacobi manifold $(N,\Lambda,R)$ and a Lie system $X$  satisfying that $V^X\subset {\rm Ham}(N,\Lambda,R)$.
\end{definition}

\begin{example}
A straightforward calculation shows that the Lie
algebra, $\mathfrak{h}$, of left-invariant vector fields on $\mathbb{H}$ is
spanned by
$$
X^L_1=\partial_x,\qquad X^L_2=\partial_
y+x\partial_z,\qquad X^L_3=\partial_ z.
$$
Consider now the system on $\mathbb{H}$ given by
\begin{equation}\label{H}
\frac{{\rm d}{\mathcal H}}{{\rm d}t}=\sum_{i=1}^3b_i(t)X^L_i(\mathcal H),\qquad \mathcal H\in\mathbb{H},
\end{equation}
for arbitrary $t$-dependent functions $b_i(t)$. Since the associated $t$-dependent vector field $X^\mathbb{H}=\sum_{i=1}^3b_i(t)X_i^L$ takes values in
a finite-dimensional Lie algebra of vector fields, then $X^\mathbb{H}$ is a Lie system.
The interest of $X^\mathbb{H}$ is due to its appearance in the solution of
the so-called {\it quantum Lie systems} as well as Lie systems admitting a VG Lie algebra
isomorphic to $\mathfrak{h}$ (cf. ~\cite{Dissertationes}).

Let us show that the Lie system (\ref{H}) leads to a Jacobi--Lie system $(\mathbb{H},\Lambda_\mathbb{H},R_\mathbb{H},X^\mathbb{H})$, where $\Lambda_{\mathbb H}$ and $R_\mathbb{H}$ are those appearing in Example \ref{BivectorRH}. Note that,
$$
X^L_1=[\Lambda_{\mathbb H},-y]_{SN}-yR_\mathbb{H},\quad X^L_2=[\Lambda_{\mathbb H},x]_{SN}+xR_\mathbb{H},\quad X^L_3=[\Lambda_{\mathbb H},1]_{SN}+R_\mathbb{H}.
$$
That is, $X^L_1,X^L_2$ and $X^L_3$ are Hamiltonian vector fields with Hamiltonian functions $h_1=-y$, $h_2=x$ and $h_3=1$, respectively.
Hence, $(\mathbb{H},\Lambda_\mathbb{H},R_\mathbb{H},X^\mathbb{H})$ is a Jacobi--Lie system. Note that each $h_i$ is a first-integral of $X_i^L$ and $R_\mathbb{H}$ for $i=1,2,3$, respectively.
\end{example}
\begin{example}
Consider the Lie group $\mathbb{G}\equiv SL(2,\mathbb{R})$ of matrices $2\times 2$ with real entries $\alpha,\beta,\gamma,\delta$ satisfying $\alpha\delta-\beta\gamma=1$. Close to its neutral element, $\{\alpha,\beta,\gamma\}$ form a local coordinate system for $\mathbb{G}$. A short calculation shows that 
$$
X^R_1=\alpha\partial_\alpha+\beta \partial_\beta-\gamma\partial_\gamma,\qquad X^R_2=\gamma\partial_\alpha+\frac{1+\beta\gamma}{\alpha}\partial_ \beta,\qquad X^R_3=\alpha\partial_ \gamma
$$
span the space of right-invariant vector fields on $\mathbb{G}$. If we define
\begin{equation}\label{JMG}
\Lambda_\mathbb{G}=\alpha\beta\partial_\alpha\wedge\partial_\beta-(1+\beta\gamma)\partial_\beta\wedge\partial_\gamma,\qquad R_\mathbb{G}=\alpha\partial_\alpha-\beta\partial_\beta+\gamma\partial_\gamma,
\end{equation}
we obtain that $[\Lambda_{\mathbb{G}},\Lambda_{\mathbb{G}}]_{SN}=-2\alpha\partial_\alpha\wedge\partial_\beta\wedge\partial_\gamma=2R_\mathbb{G}\wedge \Lambda_{\mathbb{G}}$ and $[R_{\mathbb{G}},\Lambda_{\mathbb{G}}]_{SN}=0$. So, $(\mathbb{G},\Lambda_\mathbb{G},R_\mathbb{G})$ is  a Jacobi manifold.
Consider now the system on $\mathbb{G}$ given by
$
\frac{{\rm d}\mathcal G}{{\rm d}t}=\sum_{i=1}^3b_i(t)X^R_i(\mathcal G)$, $\mathcal G\in\mathbb{G},
$ 
for any $t$-dependent functions $b_i(t)$. Since $X^\mathbb{G}=\sum_{i=1}^3b_i(t)X_i^R$ takes values in
the Lie algebra $V^\mathbb{G}=\langle X^R_1,X^R_2,X^R_3\rangle$, the system $X^\mathbb{G}$ is a Lie system. System $X^\mathbb{G}$ occurs in the study of  Briosche--Darboux--Halphen equations, Kummer--Schwarz equations, Milne--Pinney equations, etc.~\cite{CGLS,Dissertationes,EHLS15}.

We now prove that $(\mathbb{G},\Lambda_\mathbb{G},R_\mathbb{G},X^\mathbb{G})$ is a Jacobi--Lie system. In fact, $X_1^R,X^R_2,X^R_3$ are Hamiltonian  relative to $(\mathbb{G},\Lambda_\mathbb{G},R_\mathbb{G})$ with good Hamiltonian functions
\begin{equation}\label{HamG}
h_1=1+2\beta\gamma,\qquad h_2=\frac{\gamma}{\alpha}(1+\beta\gamma),\qquad h_3=-\beta\alpha.
\end{equation}
Note that these functions are first-integrals of $X_1^R,X^R_2,X^R_3$, respectively, and $R_\mathbb{G}$. This allows us to use  
$X_i^R+{\rm d}h_i$ with $i=1,2,3$, and $R_\mathbb{G}$ to span a sub-bundle $L_\mathbb{G}$ of ${T}\mathbb{G}\oplus_\mathbb{G}{T}^*\mathbb{G}$ originating a Dirac structure on $\mathbb{G}$ \cite{Co87}. Vector fields $X^R_1,X_2^R,X_3^R$ are Hamiltonian relative to $L_\mathbb{G}$ giving rise to a Dirac--Lie system $(\mathbb{G},L_\mathbb{G},X^\mathbb{G})$ \cite{CGLS}. 
\end{example}
\begin{lemma}
The space $G(N,\Lambda,R)$ of good Hamiltonian functions of a Jacobi manifold $(N,\Lambda,R)$ is a Poisson algebra relative to the bracket $\{\cdot,\cdot\}_{\Lambda,R}$, and $\star_g:f\in C^\infty(N)\mapsto \{g,f\}_{\Lambda,R}\in C^\infty(N)$,  $\forall g\in G(N,\Lambda,R)$, is a derivation on $(C^\infty(N),\sbt)$.
 \end{lemma}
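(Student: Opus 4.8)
The plan is to reduce the statement to two facts: that $G(N,\Lambda,R)\subset C^\infty(N)$ is closed under both the pointwise product $\sbt$ and the bracket $\{\cdot,\cdot\}_{\Lambda,R}$, and an elementary computation identifying the failure of the Leibniz rule with a term that vanishes precisely on good Hamiltonian functions. First, since $R$ is a vector field it is a derivation of $(C^\infty(N),\sbt)$, so $R(af+bg)=aRf+bRg$ and $R(f\sbt g)=(Rf)\sbt g+f\sbt(Rg)$ for all $f,g\in C^\infty(N)$, $a,b\in\mathbb{R}$; hence $Rf=Rg=0$ forces $R(af+bg)=R(f\sbt g)=0$, and $G(N,\Lambda,R)$ is a (unital, as $R1=0$) subalgebra of the commutative associative algebra $(C^\infty(N),\sbt)$. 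Moreover, bilinearity, antisymmetry and the Jacobi identity for $\{\cdot,\cdot\}_{\Lambda,R}$ are inherited on $G(N,\Lambda,R)$ from the local Lie algebra $(C^\infty(N),\{\cdot,\cdot\}_{\Lambda,R})$ recalled above, once closure under this bracket is established.

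The key step is that closure. For $f,g\in G(N,\Lambda,R)$ one has $Rf=Rg=0$, so $\{f,g\}_{\Lambda,R}=\Lambda({\rm d}f,{\rm d}g)$, and closure amounts to $R(\Lambda({\rm d}f,{\rm d}g))=0$. Now, for a vector field $R$ and a bivector $\Lambda$ the Schouten--Nijenhuis bracket is the Lie derivative, $[R,\Lambda]_{SN}=\mathcal{L}_R\Lambda$, and by Cartan's formula $\mathcal{L}_R{\rm d}h={\rm d}(Rh)$ for every $h\in C^\infty(N)$; therefore
$$
(\mathcal{L}_R\Lambda)({\rm d}f,{\rm d}g)=R\big(\Lambda({\rm d}f,{\rm d}g)\big)-\Lambda\big({\rm d}(Rf),{\rm d}g\big)-\Lambda\big({\rm d}f,{\rm d}(Rg)\big).
$$
Since $[R,\Lambda]_{SN}=0$ is part of the definition of a Jacobi manifold and $Rf=Rg=0$, the right-hand side collapses to $R(\Lambda({\rm d}f,{\rm d}g))$, which must then vanish. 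Hence $\{f,g\}_{\Lambda,R}\in G(N,\Lambda,R)$, and $(G(N,\Lambda,R),\{\cdot,\cdot\}_{\Lambda,R})$ is a Lie algebra.

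It remains to prove the Leibniz rule, and this follows from the second assertion of the lemma, which I would obtain by a direct expansion using $\{g,f\}_{\Lambda,R}=\Lambda({\rm d}g,{\rm d}f)+gRf-fRg$ and the Leibniz rule for ${\rm d}$ and for $R$: for arbitrary $f,g,h\in C^\infty(N)$ all the $\Lambda$-terms and the contributions $gf\,Rh$, $gh\,Rf$ cancel against those in $\{g,f\}_{\Lambda,R}\sbt h+f\sbt\{g,h\}_{\Lambda,R}$, leaving
$$
\{g,f\sbt h\}_{\Lambda,R}-\big(\{g,f\}_{\Lambda,R}\sbt h+f\sbt\{g,h\}_{\Lambda,R}\big)=f\sbt h\sbt Rg.
$$
Thus if $g\in G(N,\Lambda,R)$ the right-hand side vanishes and $\star_g=\{g,\cdot\}_{\Lambda,R}$ is a derivation of $(C^\infty(N),\sbt)$; restricting $\star_g$ to the subalgebra $G(N,\Lambda,R)$ (legitimate by the closure proved above) gives the Leibniz identity for $\{\cdot,\cdot\}_{\Lambda,R}$ on $G(N,\Lambda,R)$, so the latter is a Poisson algebra. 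The only genuine obstacle is the closure of $G(N,\Lambda,R)$ under the bracket; the rest is bookkeeping, and the essential input there is precisely the compatibility condition $[R,\Lambda]_{SN}=0$, without which good Hamiltonian functions need not close into a Lie subalgebra.
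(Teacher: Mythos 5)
Your proposal is correct and follows essentially the same route as the paper: closure of $G(N,\Lambda,R)$ under $\sbt$ and under $\{\cdot,\cdot\}_{\Lambda,R}$ via the compatibility condition $[R,\Lambda]_{SN}=0$, followed by the derivation property of $\star_g$. The only cosmetic differences are that the paper expresses $R(\Lambda({\rm d}u_1,{\rm d}u_2))$ through iterated Schouten--Nijenhuis brackets where you use $\mathcal{L}_R\Lambda$, and that the paper obtains the Leibniz rule by identifying $\star_{u_1}$ with the vector field $X_{u_1}$ (using $Ru_1=0$) rather than by your direct computation of the defect term $f\sbt h\sbt Rg$.
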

\begin{proof}
First, we prove that the Jacobi bracket of two good Hamiltonian functions is a good Hamiltonian function.
For general functions $u_1,u_2\in C^\infty(N)$, we have
 $$R\{u_1,u_2\}_{\Lambda,R}=R(\Lambda({\rm d} u_1,{\rm d}u_2)+u_1Ru_2-u_2Ru_1).$$
If $u_1,u_2 \in G(N,\Lambda,R)$, then $Ru_1=Ru_2=0$. From this and $[\Lambda,R]_{SN}=0$, we have
\begin{equation*}
R\{u_1,u_2\}_{\Lambda,R}=R(\Lambda({\rm d}u_1,{\rm d}u_2))=[R,[[\Lambda,u_1]_{SN},u_2]_{SN}]_{SN}=0.
\end{equation*}
Hence, $\{u_1,u_2\}_{\Lambda,R}\in G(N,\Lambda,R)$, which becomes a Lie algebra relative to $\{\cdot,\cdot\}_{\Lambda,R}$. Note also that $R(u_1\sbt u_2)=0$ and $u_1\sbt u_2\in G(N,\Lambda,R)$.

Given an arbitrary $u_1\in G(N,\Lambda,R)$ and any $u_2,u_3\in C^\infty(N)$, we have that
\begin{multline*}
\star_{u_1}(u_2\sbt u_3)=\Lambda({\rm d}u_1,{\rm d}(u_2\sbt u_3))+u_1R(u_2\sbt u_3)-u_2\sbt u_3 Ru_1\\=X_{u_1}(u_2\sbt u_3)=u_3\sbt X_{u_1}u_2+u_2\sbt X_{u_1}u_3=u_3\star_{u_1}u_2+u_2\star_{u_1}u_3.
\end{multline*}
So, $\star_{u_1}$ is a derivation on $(C^\infty(N),\sbt)$ and also  on $(G(N,\Lambda,R),\{\cdot,\cdot\}_{\Lambda,R})$. From this it  follows that
 $(G(N,\Lambda,R),\sbt,\{\cdot,\cdot\}_{\Lambda,R})$ is a Poisson algebra.
\end{proof}

\section{Jacobi--Lie Hamiltonian systems}

\begin{definition} We call {\it Jacobi--Lie Hamiltonian system} a quadruple $(N,\Lambda,R,h)$,
where $(N,\Lambda,R)$ is a Jacobi manifold and $h:(t,x)\in\mathbb{R}\times N\mapsto h_t(x)\in
N$ is a $t$-dependent function such that
${\rm Lie}(\{h_t\}_{t\in\mathbb{R}},\{\cdot,\cdot\}_{\Lambda,R})$ is finite-dimensional. Given a system $X$ on $N$, we say that $X$ admits a {\it Jacobi--Lie Hamiltonian system}  $(N,\Lambda,R,h)$ if
$X_t$ is a Hamiltonian vector field with Hamiltonian function $h_t$ (with respect to $(N,\Lambda,R)$) for each $t\in \mathbb{R}$.
\end{definition}

\begin{example}
Observe that $h_t=\sum_{i=1}^3b_i(t)h_i= -b_1(t) y+b_2(t)x+b_3(t) $ is a Hamiltonian
function of $X^\mathbb{H}_t$ in \eqref{H} for every $t\in\mathbb{R}$. In addition,
$$
\{h_1,h_2\}_{\Lambda_\mathbb{H},R_\mathbb{H}}=h_3,\qquad \{h_1,h_3\}_{\Lambda_\mathbb{H},R_\mathbb{H}}=0,\qquad \{h_2,h_3\}_{\Lambda_\mathbb{H},R_\mathbb{H}}=0.
$$
So, the functions $\{h_t\}_{t\in\mathbb{R}}$ span a finite-dimensional real Lie
algebra of functions relative to the Lie bracket induced by (\ref{BivectorH}) and $R_\mathbb{H}=\partial/\partial z$. Thus, $X^\mathbb{H}$ admits a Jacobi--Lie Hamiltonian system $(N,\Lambda_\mathbb{H},R_\mathbb{H},h)$.
\end{example}

\begin{example}
Relative to the Lie bracket induced by $(G,\Lambda_\mathbb{G},R_\mathbb{G})$  given in (\ref{JMG}), the functions (\ref{HamG}) satisfy that 
$$
\{h_1,h_2\}_{\Lambda_\mathbb{G},R_\mathbb{G}}=-2h_2,\qquad \{h_1,h_3\}_{\Lambda_\mathbb{G},R_\mathbb{G}}=2h_3,\qquad \{h_2,h_3\}_{\Lambda_\mathbb{G},R_\mathbb{G}}=-h_1.
$$
So, $(\mathbb{G},\Lambda_{\mathbb{G}},R_{\mathbb{G}},h\equiv \sum_{i=1}^3b_i(t)h_i)$ is a Jacobi--Lie Hamiltonian system for $X^\mathbb{G}$.
\end{example}

\begin{theorem} If $(N,\Lambda,R,h)$ is a Jacobi--Lie Hamiltonian system, then the system $X$ of the form $X_t=X_{h_t}$,  $\forall t\in\mathbb{R}$, gives rise to a Jacobi--Lie
system $(N,\Lambda,R,X)$. If $X$ is a Lie system and the $\{X_t\}_{t\in\mathbb{R}}$ are good Hamiltonian vector fields, then $X$ admits a Jacobi--Lie Hamiltonian.
\end{theorem}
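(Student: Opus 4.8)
My plan rests on the fact, noted above, that $\phi_{\Lambda,R}\colon f\in C^\infty(N)\mapsto X_f\in{\rm Ham}(N,\Lambda,R)$ is a Lie algebra morphism. For the first implication, assume $(N,\Lambda,R,h)$ is a Jacobi--Lie Hamiltonian system and put $X_t\equiv X_{h_t}=\phi_{\Lambda,R}(h_t)$, which is Hamiltonian with respect to $(N,\Lambda,R)$ for every $t$, so $\{X_t\}_{t\in\mathbb{R}}\subset{\rm Ham}(N,\Lambda,R)$. Since $\phi_{\Lambda,R}$ is a Lie algebra morphism, the image $\phi_{\Lambda,R}\big({\rm Lie}(\{h_t\}_{t\in\mathbb{R}},\{\cdot,\cdot\}_{\Lambda,R})\big)$ is a Lie subalgebra of ${\rm Ham}(N,\Lambda,R)$ containing every $X_t$, hence it contains $V^X={\rm Lie}(\{X_t\}_{t\in\mathbb{R}})$. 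As the image of a finite-dimensional Lie algebra it is finite-dimensional, so $V^X$ is finite-dimensional too; thus $X$ is a Lie system with $V^X\subset{\rm Ham}(N,\Lambda,R)$, i.e.\ $(N,\Lambda,R,X)$ is a Jacobi--Lie system.

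For the converse, suppose $X$ is a Lie system whose $\{X_t\}_{t\in\mathbb{R}}$ are good Hamiltonian vector fields. By the previous lemma $G(N,\Lambda,R)$ is a Lie subalgebra of $(C^\infty(N),\{\cdot,\cdot\}_{\Lambda,R})$, so its image $\phi_{\Lambda,R}(G(N,\Lambda,R))$ is a Lie subalgebra of ${\rm Ham}(N,\Lambda,R)$ containing every $X_t$, and therefore containing $V^X$; hence each element of $V^X$ is a good Hamiltonian vector field. Fix a basis $X_1,\dots,X_r$ of $V^X$ with $[X_i,X_j]=\sum_k c_{ij}^k X_k$, choose good Hamiltonian functions $h_1,\dots,h_r$ with $X_i=X_{h_i}$, and write $X_t=\sum_i b_i(t)X_i$, the coefficients $b_i(t)$ depending on $t$ as regularly as $X$ does, as is standard for Lie systems via the linear independence of the $X_i$. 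Setting $h_t\equiv\sum_i b_i(t)h_i$ we get $X_{h_t}=\sum_i b_i(t)X_{h_i}=X_t$ for all $t$, so it only remains to check that ${\rm Lie}(\{h_t\}_{t\in\mathbb{R}},\{\cdot,\cdot\}_{\Lambda,R})$ is finite-dimensional.

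The crux is the following observation: a \emph{good} Hamiltonian function $f$ with $X_f=0$ lies in the centre of $(C^\infty(N),\{\cdot,\cdot\}_{\Lambda,R})$. Indeed, from $X_f=\widehat\Lambda({\rm d}f)+fR$ we have $\Lambda({\rm d}f,{\rm d}g)=X_f(g)-fRg$, hence $\{f,g\}_{\Lambda,R}=\Lambda({\rm d}f,{\rm d}g)+fRg-gRf=X_f(g)-gRf=0$ for every $g\in C^\infty(N)$, using $X_f=0$ and $Rf=0$. Now define $f_{ij}\equiv\{h_i,h_j\}_{\Lambda,R}-\sum_k c_{ij}^k h_k$; each $f_{ij}$ belongs to $G(N,\Lambda,R)$ and satisfies $X_{f_{ij}}=[X_i,X_j]-\sum_k c_{ij}^k X_k=0$, so it is central for $\{\cdot,\cdot\}_{\Lambda,R}$. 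Consequently the finite-dimensional subspace $W\equiv\langle h_1,\dots,h_r\rangle+\langle f_{ij}:1\leq i<j\leq r\rangle$ of $C^\infty(N)$ is closed under $\{\cdot,\cdot\}_{\Lambda,R}$, because $\{h_i,h_j\}_{\Lambda,R}=\sum_k c_{ij}^k h_k+f_{ij}\in W$ while all brackets involving some $f_{ij}$ vanish. Hence $W$ is a finite-dimensional real Lie algebra, and since $\{h_t\}_{t\in\mathbb{R}}\subset\langle h_1,\dots,h_r\rangle\subset W$ we conclude ${\rm Lie}(\{h_t\}_{t\in\mathbb{R}},\{\cdot,\cdot\}_{\Lambda,R})\subset W$ is finite-dimensional; thus $(N,\Lambda,R,h)$ is a Jacobi--Lie Hamiltonian system admitted by $X$.

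The main obstacle, and the only genuinely non-formal step, is that centrality observation: it is what lets one ``close up'' the finitely many Hamiltonian functions $h_i$ into a finite-dimensional Lie algebra by adjoining the discrepancy functions $f_{ij}$, mirroring the role of Casimirs in the Lie--Hamilton (Poisson) case. It is essential here that the $h_i$ be \emph{good}: for a general Hamiltonian function $f$ with $X_f=0$ one only obtains $\{f,g\}_{\Lambda,R}=-gRf$, which need not vanish, so the argument would break down without the good-Hamiltonian hypothesis in the second implication.
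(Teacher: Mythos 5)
Your proposal is correct and follows essentially the same route as the paper: the direct part uses that $\phi_{\Lambda,R}$ is a Lie algebra morphism sending the finite-dimensional Lie algebra generated by $\{h_t\}_{t\in\mathbb{R}}$ onto a finite-dimensional Lie algebra of Hamiltonian vector fields containing $V^X$, and the converse adjoins the discrepancy functions $s_{ij}=\{h_i,h_j\}_{\Lambda,R}-\sum_k c_{ijk}h_k$, which are good Hamiltonian functions with vanishing Hamiltonian vector field and hence central. Your explicit verification that such functions are central (via $\{f,g\}_{\Lambda,R}=X_f(g)-gRf$) is a welcome elaboration of a step the paper only asserts, but it is the same argument.
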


\begin{proof}

Let us prove the direct part. By assumption, the Hamiltonian functions $\{h_t\}_{t\in\mathbb{R}}$ are contained in a finite-dimensional Lie algebra ${\rm Lie}(\{h_t\}_{t\in\mathbb{R}}, \{\cdot,\cdot\}_{\Lambda,R})$. The Lie algebra morphism
$\phi_{\Lambda,R}:f\in C^{\infty}(N)\mapsto X_f\in $ Ham$(N,\Lambda,R)$ maps the curve $h_t$ into a curve $X_t$ within $\phi_{\Lambda,R}({\rm Lie}(\{h_t\}_{t\in\mathbb{R}}, \{\cdot,\cdot\}_{\Lambda,R})).$ Since Lie($\{h_t\}_{t},\{\cdot,\cdot\}_{\Lambda,R}$) is finite-dimensional and $\phi_{\Lambda,R}$ is a Lie algebra morphism, $\phi_{\Lambda,R}( {\rm Lie}(\{h_t\}_{t},\{\cdot,\cdot\}_{\Lambda,R}))$ is a finite-dimensional Lie algebra. Since $X$ takes values
in the latter Lie algebra of Hamiltonian vector fields, then $(N,\Lambda,R,X)$ is a Jacobi--Lie system.

 Let us prove the converse. Since the elements of $\{X_t\}_{t\in\mathbb{R}}$ are good Hamiltonian vector fields by assumption and Lie$(\{X_t\}_{t\in\mathbb{R}})=V^X$, every element of $V^X$ is a good Hamiltonian vector field and we can choose
a basis $X_1,\ldots, X_r$ of $V^X$ with good Hamiltonian functions $h_1,\ldots, h_r$. The Jacobi bracket $\{h_i,h_j\}_{\Lambda,R}$ is a good Hamiltonian function for $[X_i,X_j]$.

Since $[X_i,X_j]=\sum_{k=1}^rc_{ijk}X_k$ for certain constants $c_{ijk}$, we obtain that each $$
s_{ij}=\{h_i,h_j\}_{\Lambda,R}-\sum_{k=1}^rc_{ijk}h_k,\qquad i<j,
$$
is the difference of two good Hamiltonian functions with the same Hamiltonian vector field. Hence, $\{s_{ij},h\}_{\Lambda,R}=0$ for all $h\in C^\infty(N)$. Using this, we obtain that the linear space generated by $h_1,\ldots,h_r,s_{ij}$, with $1\leq i<j\leq r$, is a finite-dimensional Lie algebra relative to $\{\cdot,\cdot\}_{\Lambda,R}$. If $X=\sum_{i=1}^rb_i(t)X_i$, then $(N,\Lambda,R,h=\sum_{i=1}^rb_i(t)h_i)$ is a Jacobi--Lie Hamiltonian system for $X$.
\end{proof}

The following proposition can be straightforwardly proved.

\begin{proposition} Let $(N,\Lambda,R,X)$ be a Jacobi--Lie system admitting a Jacobi--Lie Hamiltonian $(N,\Lambda,R,h)$ of good Hamiltonian functions $\{h_t\}_{t\in\mathbb{R}}$. Then, $f\in C^\infty(N)$ is a $t$-independent constant of motion for $X$ if and only if $f$ commutes with all the elements of ${\rm Lie}(\{h_t\}_{t\in\mathbb{R}},\{\cdot,\cdot\}_{\Lambda,R})$ relative to $\{\cdot,\cdot\}_{\Lambda,R}$.
\end{proposition}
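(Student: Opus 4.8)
The plan is to reduce the statement to a computation involving the characterization of $t$-independent constants of motion plus the fact that, on good Hamiltonian functions, the Jacobi bracket behaves like a genuine Poisson bracket (Lemma on $G(N,\Lambda,R)$) and $\phi_{\Lambda,R}$ is a Lie algebra morphism. Recall that $f\in C^\infty(N)$ is a $t$-independent constant of motion for $X$ if and only if $X_t f=0$ for all $t\in\mathbb{R}$, i.e. if and only if $X f=0$ for every $X\in V^X$ (since the $X_t$ span $V^X$ and $f$ is $t$-independent). So the task is to translate the condition ``$Xf=0$ for all $X\in V^X$'' into a bracket condition on the Hamiltonian side.

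First I would fix a basis $h_1,\dots,h_r$ of $\mathfrak{h}\equiv{\rm Lie}(\{h_t\}_{t\in\mathbb{R}},\{\cdot,\cdot\}_{\Lambda,R})$ consisting of good Hamiltonian functions; since $\mathfrak h$ is finite-dimensional by hypothesis and $\phi_{\Lambda,R}$ is a Lie algebra morphism, $\phi_{\Lambda,R}(\mathfrak h)$ is a finite-dimensional Lie algebra of Hamiltonian vector fields containing all the $X_t=X_{h_t}$, hence containing $V^X$; conversely every element of $V^X$ is of the form $X_g$ for some $g\in\mathfrak h$ because $\phi_{\Lambda,R}$ restricted to $\mathfrak h$ is surjective onto $\phi_{\Lambda,R}(\mathfrak h)\supseteq V^X$ — more carefully, each $X_{h_i}$ lies in $V^X$? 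Here I should be slightly careful: $V^X$ need not equal $\phi_{\Lambda,R}(\mathfrak h)$, only $V^X\subseteq\phi_{\Lambda,R}(\mathfrak h)$, but that inclusion is all we need for the ``only if'' direction and I will argue the ``if'' direction using that every element of $V^X$ is $X_g$ for some $g\in\mathfrak h$ (take $g$ a combination of the $h_t$, or of the $h_i$, projecting onto that vector field). The key identity is that for a good Hamiltonian function $g$ and any $f\in C^\infty(N)$,
$$
X_g f = \widehat\Lambda({\rm d}g)f + gRf = \Lambda({\rm d}g,{\rm d}f)+gRf - fRg = \{g,f\}_{\Lambda,R},
$$
where the last equality uses $Rg=0$. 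Thus $X_g f=0$ for all good Hamiltonian $g\in\mathfrak h$ if and only if $\{g,f\}_{\Lambda,R}=0$ for all such $g$, i.e. if and only if $f$ commutes with $\mathfrak h$ relative to $\{\cdot,\cdot\}_{\Lambda,R}$ — which is the claimed condition.

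Assembling: for the ``only if'' direction, if $f$ is a constant of motion then $X f=0$ for every $X\in V^X\subseteq\phi_{\Lambda,R}(\mathfrak h)$; in particular $X_{h_i}f=0$ for each basis element $h_i$, and by the displayed identity $\{h_i,f\}_{\Lambda,R}=0$ for all $i$, hence $\{g,f\}_{\Lambda,R}=0$ for all $g\in\mathfrak h$ by bilinearity. For the ``if'' direction, if $\{g,f\}_{\Lambda,R}=0$ for all $g\in\mathfrak h$, then for each $t$, $X_t f = X_{h_t}f=\{h_t,f\}_{\Lambda,R}=0$ since $h_t\in\mathfrak h$ is good, so $f$ is a $t$-independent constant of motion. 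The one point requiring genuine care — and the only place the ``good'' hypothesis is really used — is the identity $X_g f=\{g,f\}_{\Lambda,R}$: without $Rg=0$ one would pick up the extra term $-fRg$, breaking the clean equivalence. Everything else is bookkeeping with the Lie algebra morphism $\phi_{\Lambda,R}$ and bilinearity of the bracket, which is why the proposition is ``straightforwardly proved.'' I would present the argument in the two-direction form above, stating the displayed identity as the crux.
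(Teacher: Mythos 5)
The paper offers no proof of this proposition (it is merely declared ``straightforwardly proved''), so there is no argument of the authors' to compare against; your proof is correct and is evidently the intended one. The crux is precisely your displayed identity $X_gf=\{g,f\}_{\Lambda,R}$ for good $g$, and the only point worth making fully explicit is that it applies to \emph{every} element of $\mathfrak{h}={\rm Lie}(\{h_t\}_{t\in\mathbb{R}},\{\cdot,\cdot\}_{\Lambda,R})$ because, by the paper's Lemma on $G(N,\Lambda,R)$, the good Hamiltonian functions are closed under $\{\cdot,\cdot\}_{\Lambda,R}$, so $\mathfrak{h}\subset G(N,\Lambda,R)$ and a basis of good functions exists. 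Your hedge about $\phi_{\Lambda,R}(\mathfrak{h})$ versus $V^X$ can be dropped: since $\phi_{\Lambda,R}$ is a Lie algebra morphism, the image of the Lie algebra generated by $\{h_t\}_{t\in\mathbb{R}}$ equals the Lie algebra generated by $\{X_t\}_{t\in\mathbb{R}}$, i.e. $\phi_{\Lambda,R}(\mathfrak{h})=V^X$, which closes the ``only if'' direction without further argument.
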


\begin{example} Consider again the functions $h_1,h_2,h_3$ given in (\ref{HamG}) and the Jacobi manifold $(\mathbb{G},\Lambda_\mathbb{G},R_\mathbb{G})$, with $\Lambda_\mathbb{G}$ and $R_\mathbb{G}$ given by (\ref{JMG}). Then, $\{h^2_1+4h_2h_3,h_i\}_{\Lambda_\mathbb{G},R_\mathbb{G}}=0$ for $i=1,2,3$. So, $C=h^2_1+4h_2h_3$ is a constant of motion for $X^\mathbb{G}$.
\end{example}
\section{Jacobi--Lie systems on low dimensional\\ manifolds}
We now prove that every Lie system on the real line gives rise to a Jacobi--Lie system. Next, we classify, via the GKO classification 
(see \cite{LH2013,GKP92} and Table \ref{table1}), all VG Lie algebras on $\mathbb{R}^2$ related to Jacobi--Lie systems.

\begin{table}[t] {\footnotesize
 \noindent
\caption{{\small
{\footnotesize VG LIE ALGEBRAS OF HAMILTONIAN VECTOR FIELDS ON ${\mathbb R}^2$  RELATIVE TO A JACOBI MANIFOLD.  P means Poisson. Functions $1,\xi_1(x),\ldots,\xi_r(x)$ are linearly independent and $\eta_1(x),\ldots,\eta_r(x)$ form a basis of solutions for certain $d^rf/dx^r=\sum_{\alpha=0}^{r-1}c_\alpha d^\alpha f/dx^\alpha, c_\alpha\in\mathbb{R}$.
We write $\mathfrak{g}_1\ltimes \mathfrak{g}_2$ for the (Lie algebra) semi-direct sum of $\mathfrak{g}_1$ by $\mathfrak{g}_2$, i.e., $\mathfrak{g}_2$ is an ideal of $\mathfrak{g}_1\ltimes\mathfrak{g}_2$.}}}
\label{table1}
\noindent\hfill
 \begin{tabular}{ llll}
&  &\\[-1.0ex]
\hline
&  &\\[-1.9ex]
\#&Lie algebra & Basis of vector fields $X_i$ & Jacobi
\\[+0.25ex]
\hline
 &  &\\[-1.9ex]
P$_1$&$A_\alpha\simeq \mathbb{R}\ltimes \mathbb{R}^2$ & $  { {\partial_x} ,    {\partial_y}  ,   \alpha(x\partial_x + y\partial_y)  +  y\partial_x - x\partial_y},\quad \ \alpha\geq 0$&$(\alpha=0)$ P
\\[+0.25ex]
P$_2$&$\mathfrak{sl}(2)$ & $ {\partial_x},   {x\partial_x  +  y\partial_y}  ,   (x^2  -  y^2)\partial_x  +  2xy\partial_y$&P
\\[+0.25ex]
P$_3$&$\mathfrak{so}(3)$ &${     { y\partial_x  -  x\partial _y},     { (1  +  x^2  -  y^2)\partial_x  +  2xy\partial_y}  ,    }$&
\\[+0.25ex]
 &  &${      2xy\partial_x  +  (1  +  y^2  -  x^2)\partial_y}$&P
\\[+0.25ex]
P$_4$&$\mathbb{R}^2\ltimes\mathbb{R}^2$ &$   {\partial_x},   {\partial_y} ,  x\partial_x + y\partial_y,   y\partial_x - x\partial_y$&No
\\[+0.25ex]
P$_5$&$\mathfrak{sl}(2 )\ltimes\mathbb{R}^2$ &${   {\partial_x},   {\partial_y} ,  x\partial_x - y\partial_y,  y\partial_x,  x\partial_y}$&P\\[+0.25ex]
P$_6$&$\mathfrak{gl}(2 )\ltimes\mathbb{R}^2$ &${  {\partial_x},    {\partial_y} ,   x\partial_x,   y\partial_x,   x\partial_y,   y\partial_y}$& No\\[+0.25ex]
P$_7$&$\mathfrak{so}(3,1)$ &${   {\partial_x},   {\partial_y} ,   x\partial_x\!+\! y\partial_y,   y\partial_x \!-\! x\partial_y,   (x^2 \!-\! y^2)\partial_x \!+\! 2xy\partial_y ,}$  &
\\[+0.25ex]
 &  &${    2xy\partial_x \!+\! (y^2\!-\!x^2)\partial_y}$  &No
\\[+0.25ex]
P$_8$&$\mathfrak{sl}(3 )$ &${   {\partial_x},    {\partial_y} ,   x\partial_x,   y\partial_x,   x\partial_y,   y\partial_y,   x^2\partial_x + xy\partial_y,   xy\partial_x  +  y^2\partial_y}$& No
\\[+1.5ex]
I$_1$&$\mathbb{R}$ &$   {\partial_x}  $ & P, $(0,\partial_x)$
\\[+0.25ex]
I$_2$&$\mathfrak{h}_2$ & $   {\partial_x}  ,  x\partial_x$& P, $(0,\partial_x)$
\\[+0.25ex]
I$_3$&$\mathfrak{sl}(2 )$ (type I) &$   {\partial_x} ,  x\partial_x,  x^2\partial_x$&  P, $(0,\partial_x)$
\\[+0.25ex]
I$_4$&$\mathfrak{sl}(2 )$ (type II) & ${   {\partial_x  +  \partial_y},    {x\partial _x + y\partial_y} ,   x^2\partial_x  +  y^2\partial_y}$ & P\\[+0.25ex]
I$_5$&$\mathfrak{sl}(2 )$ (type III) &${  {\partial_x},    {2x\partial_x + y\partial_y} ,   x ^2\partial_x  +  xy\partial_y}$& P
\\[+0.25ex]
I$_6$&$\mathfrak{gl}(2 )$ (type I)& ${  {\partial_x},    {\partial_y} ,   x\partial_x,   x^2\partial_x}$&No
\\[+0.25ex]
I$_7$&$\mathfrak{gl}(2 )$ (type II)& ${   {\partial_x},   {y\partial_y}  ,     x\partial_x,    x^2\partial_x +  xy \partial_y}$&No
 \\[+0.25ex]
I$_8$&$B_\alpha\simeq \mathbb{R}\ltimes\mathbb{R}^2$ &${   {\partial_x},    {\partial_y} ,   x\partial_x  +  \alpha y\partial_y},\quad  0<|\alpha|\leq 1$&$(\alpha=-1)$  P\\[+0.25ex]
I$_9$&$\mathfrak{h}_2\oplus\mathfrak{h}_2$ &${  {\partial_x},    {\partial_y} ,   x\partial_x,  y\partial_y}$&No
\\[+0.25ex]
I$_{10}$&$\mathfrak{sl}(2 )\oplus \mathfrak{h}_2$ & ${  {\partial_x},    {\partial_y}  ,   x\partial_x,  y\partial_y,  x^2\partial_x }$&No
\\[+0.25ex]
I$_{11}$&$\mathfrak{sl}(2 )\oplus\mathfrak{sl}(2 )$ &$   {\partial_x},    {\partial_y} ,   x\partial_x,   y\partial_y,   x^2\partial_x ,   y^2\partial_y $&No\\[+0.25ex]
I$_{12}$&$\mathbb{R}^{r + 1}$ &$  {\partial_y}  ,   \xi_1(x)\partial_y, \ldots , \xi_r(x)\partial_y $& P, $(0,\partial_y)$
\\[+0.25ex]
I$_{13}$&$\mathbb{R}\ltimes \mathbb{R}^{r + 1}$ &$   {\partial_y}  ,   y\partial_y,    \xi_1(x)\partial_y, \ldots , \xi_r(x)\partial_y $ & P, $(0,\partial_y)$\\[+0.25ex]
I$_{14}$&$\mathbb{R}\ltimes \mathbb{R}^{r}$ & ${  {\partial_x},   {\eta_1(x)\partial_y}  ,  {\eta_2(x)\partial_y},\ldots ,\eta_r(x)\partial_y} $&P
\\[+0.25ex]
I$_{15}$&$\mathbb{R}^2\ltimes \mathbb{R}^{r}$ &  ${  {\partial_x},    {y\partial_y}  ,    {\eta_1(x)\partial_y},\ldots, \eta_r(x)\partial_y} $&No\\[+0.25ex]
I$_{16}$&$C_\alpha^r\simeq \mathfrak{h}_2\ltimes\mathbb{R}^{r + 1}$ & ${   {\partial_x},    {\partial_y}  ,   x\partial_x  +  \alpha y\partial y,   x\partial_y, \ldots, x^r\partial_y} ,\quad \alpha\in\mathbb{R}$&$(\alpha=-1)$ P
\\[+0.25ex]
I$_{17}$&$\mathbb{R}\ltimes(\mathbb{R}\ltimes \mathbb{R}^{r})$ &$   {\partial_x},    {\partial_y}  ,   x\partial_x  +  (ry  +  x^r)\partial_y ,   x\partial_y, \ldots,  x^{r - 1}\partial_y $ &No
\\[+0.25ex]
I$_{18}$&$(\mathfrak{h}_2\oplus \mathbb{R})\ltimes \mathbb{R}^{r + 1}$ & $   {\partial_x},    {\partial_y} ,   x\partial_x,   x\partial_y,   y\partial_y,   x^2\partial_y, \ldots,x^r\partial_y$ &No
\\[+0.25ex]
I$_{19}$&$\mathfrak{sl}(2 )\ltimes \mathbb{R}^{r + 1}$ &  $   {\partial_x},    {\partial_y}  ,   x\partial_y,    2x\partial _x  +  ry\partial_y,   x^2\partial_x  +  rxy\partial_y,   x^2\partial_y,\ldots, x^r\partial_y  $&No
\\[+0.25ex]
I$_{20}$&$\mathfrak{gl}(2 )\ltimes \mathbb{R}^{r + 1}$ &  $  {\partial_x},    {\partial_y} ,   x\partial_x,   x\partial_y,   y\partial_y,   x^2\partial_x  +  rxy\partial_y,   x^2\partial_y,\ldots, x^r\partial_y $ &No
\\[+1.5ex]
\hline
 \end{tabular}
\hfill}
\end{table}

Let us show that (\ref{coupRiceq}) can be associated with a Jacobi--Lie system for $n=1$. Since every Lie system on $\mathbb{R}$ can be brought into this form through a local diffeomorphism on $\mathbb{R}$, this proves that every Lie system on the real line can be considered as a Jacobi--Lie system. Recall that (\ref{coupRiceq}) is a Lie system with a VG Lie algebra $V$ spanned by (\ref{VF}). Note that $V$ consists of Hamiltonian vector fields
with respect to $(\mathbb{R},
\Lambda=0,R={\partial_{x_1}}).$
Indeed, 
$X_1,X_2,X_3\in V$ admit Hamiltonian functions
$h_1=1, h_2=x_1, h_3=x_1^2.$ Observe that $V$ does not consists of Hamiltonian vector fields relative to any non-zero Poisson bivector on $\mathbb{R}$. Hence, Riccati equations on $\mathbb R$ are not Lie--Hamilton systems but $(\mathbb{R},\Lambda=0,R=\partial_{x_1},a_0(t)X_1+a_1(t)X_2+a_2(t)X_3)$ is a Jacobi--Lie system.

We now classify Jacobi--Lie systems $(\mathbb{R}^2,\Lambda,R,X)$, where we may assume $\Lambda$ and $R$ to be locally equal or different from zero. There exists just one Jacobi--Lie system with $\Lambda=0$ and $R=0$: $(\mathbb{R}^2,\Lambda=0,R=0,X=0)$. Jacobi--Lie systems of the form $(\mathbb{R}^2,\Lambda\neq 0,R=0)$ are Lie--Hamilton systems, whose VG Guldberg Lie algebras were obtained in \cite{LH2013}. In Table \ref{table1} we indicate these cases by writing P ({\it Poisson}). A Jacobi--Lie system $(\mathbb{R}^2,\Lambda=0,R\neq 0,X)$ is such that if $Y\in V^X$, then $Y=fR$ for certain $f\in C^\infty(\mathbb{R}^2)$. All cases of this type can easily be obtained out of the bases given in Table \ref{table1}. We describe them by writing $(0,R)$ at the last column.

Propositions \ref{Prop1} and \ref{Prop2} below show that the VG Lie algebras of Table \ref{table1} that do not fall into the mentioned categories are not VG Lie algebras of Hamiltonian vector fields with respect to  Jacobi manifolds $(\mathbb{R}^2,\Lambda\neq 0,R\neq 0)$. So, every $(\mathbb{R}^2,\Lambda,R,X)$ admits a VG Lie algebra belonging to one of the previously mentioned classes\footnote{To exclude P$_1$ with $\alpha\neq 0$ and I$_{17}$, we need a trivial modification of Proposition \ref{Prop2} using exactly the same line of thought.}.

\begin{lemma} Every Jacobi manifold on the plane with $R\neq 0$ and $\Lambda\neq 0$ admits a local coordinate system $\{s,t\}$ where $R=\partial_{s}$ and $\Lambda=\partial_s\wedge\partial_t$.
\end{lemma}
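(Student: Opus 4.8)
The plan is to exhibit the coordinate system explicitly via a two-stage normalization. First I would use the condition $[R,\Lambda]_{SN}=0$ together with $R\neq 0$ to flatten the Reeb vector field. Since $R$ is a nowhere-vanishing vector field on (an open subset of) the plane, the classical straightening-out theorem gives a local coordinate system $\{s,\tilde t\}$ in which $R=\partial_s$. In these coordinates write $\Lambda=g(s,\tilde t)\,\partial_s\wedge\partial_{\tilde t}$ for some smooth function $g$; this is the general form of a bivector on a surface. The compatibility condition $[R,\Lambda]_{SN}=[\partial_s,\Lambda]_{SN}=0$ forces $\partial g/\partial s=0$, so $g=g(\tilde t)$ depends on $\tilde t$ alone. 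Because we are working on the open set where $\Lambda\neq 0$, the function $g$ is nowhere zero there.

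The second stage is to absorb $g$ by a change of the second coordinate only, so as not to disturb $R=\partial_s$. I would set $t=\int^{\tilde t} g(\tau)^{-1}\,{\rm d}\tau$, a valid local diffeomorphism in $\tilde t$ since $g\neq 0$, leaving $s$ untouched. Then $\partial_{\tilde t}=g(\tilde t)\,\partial_t$, hence $\Lambda=g(\tilde t)\,\partial_s\wedge(g(\tilde t)\partial_t)^{-1}$—more carefully, $\partial_s\wedge\partial_{\tilde t}=\partial_s\wedge(g\,\partial_t)$ is wrong orientation-wise; rather from $\partial_t = g\,\partial_{\tilde t}$ one gets $\partial_{\tilde t}=g^{-1}\partial_t$, so $\Lambda=g\cdot\partial_s\wedge(g^{-1}\partial_t)=\partial_s\wedge\partial_t$. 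Meanwhile $\partial_s$ is unchanged because the transformation does not involve $s$, so $R=\partial_s$ still. This yields the desired normal form $R=\partial_s$, $\Lambda=\partial_s\wedge\partial_t$.

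The only remaining point is consistency with the defining Jacobi identity $[\Lambda,\Lambda]_{SN}=2R\wedge\Lambda$: one should check that the normalized pair actually satisfies it, which it does since $[\partial_s\wedge\partial_t,\partial_s\wedge\partial_t]_{SN}=0$ on a surface (any $2$-vector on a $2$-manifold has vanishing Schouten bracket with itself for dimensional reasons, as the result would be a $3$-vector), while $2R\wedge\Lambda=2\,\partial_s\wedge\partial_s\wedge\partial_t=0$ as well; so both sides vanish and the identity holds automatically. In fact this observation shows that on the plane the first Jacobi condition is vacuous, and the entire content of a Jacobi structure with $R\neq 0$, $\Lambda\neq 0$ is captured by $[R,\Lambda]_{SN}=0$, which is exactly what we exploited.

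I do not expect a serious obstacle here; the mild subtlety is purely bookkeeping—keeping track of which coordinate the second diffeomorphism is allowed to touch so that $R=\partial_s$ survives, and verifying that $g$ is genuinely nonvanishing on the locus $\Lambda\neq 0$ so that $\int g^{-1}$ makes sense. The main structural input, that $\partial g/\partial s=0$, is a one-line computation of the Schouten bracket $[\partial_s,g\,\partial_s\wedge\partial_{\tilde t}]_{SN}=(\partial_s g)\,\partial_s\wedge\partial_{\tilde t}$, and everything else is the standard flow-box argument.
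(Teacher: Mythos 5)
Your proof is correct and follows essentially the same route as the paper's: straighten $R$ to $\partial_s$ by the flow-box theorem, use $[R,\Lambda]_{SN}=0$ to conclude the coefficient of $\Lambda$ is independent of $s$, and then reparametrize the second coordinate by integrating the reciprocal of that coefficient. Your additional observation that the condition $[\Lambda,\Lambda]_{SN}=2R\wedge\Lambda$ is vacuous on a surface is correct but not needed for the lemma.
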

\begin{proof} Since it is assumed $R\neq 0$, there exist local coordinates $\{s,t_0\}$ on which $R=\partial_s$ and $\Lambda=f(s,t_0)\partial_s\wedge\partial_{t_0}$. Since $[R,\Lambda]_{SN}=0$, we get that $\partial_sf=0$ and $\Lambda=f(t_0)\partial_s\wedge\partial_{t_0}$. As we assume $\Lambda\neq 0$, there exists a new variable $t=t(t_0)$ such that ${\rm d}t/{\rm d}t_0\equiv f^{-1}(t_0)$. Finally, $\Lambda=\partial_{s}\wedge \partial_t$. 
\end{proof}
\begin{definition} We call the local coordinate system $\{s,t\}$ of the above lemma {\it local rectifying coordinates} of the Jacobi manifold on the plane.
\end{definition}
\begin{lemma}\label{63} Let  $(\mathbb{R}^2,\Lambda,R)$ be a Jacobi manifold with $R_\xi\neq 0$ and $\Lambda_\xi\neq 0$ at every $\xi\in \mathbb{R}^2$. The Lie algebra morphism $\phi: C^\infty(\mathbb{R}^2)\ni f\mapsto X_f\in{\rm Ham}(\mathbb{R}^2,\Lambda, R)$ has non-trivial kernel. On local rectifying coordinates, we have $\ker \phi=\langle e^{t}\rangle$.
\end{lemma}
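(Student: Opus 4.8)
The plan is to reduce the statement to a short computation in the local rectifying coordinates supplied by the previous lemma. First I would fix local coordinates $\{s,t\}$ on $\mathbb{R}^2$ in which $R=\partial_s$ and $\Lambda=\partial_s\wedge\partial_t$. Using the convention $\widehat\Lambda(\alpha)=\Lambda(\alpha,\cdot)$ of Section~2, evaluating on ${\rm d}s$ and ${\rm d}t$ gives $\widehat\Lambda({\rm d}f)=(\partial_s f)\,\partial_t-(\partial_t f)\,\partial_s$ for every $f\in C^\infty(\mathbb{R}^2)$, so the Hamiltonian vector field of $f$ relative to $(\mathbb{R}^2,\Lambda,R)$ is
\[
X_f=\widehat\Lambda({\rm d}f)+fR=(f-\partial_t f)\,\partial_s+(\partial_s f)\,\partial_t .
\]

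Next I would read off $\ker\phi$ from this formula: imposing $X_f=0$ and comparing the coefficients of $\partial_s$ and $\partial_t$ yields the system $\partial_s f=0$, $\partial_t f=f$. The first equation forces $f$ to depend on $t$ only, and then the second becomes the ordinary differential equation $f'(t)=f(t)$, whose solutions are exactly the multiples of $e^{t}$. Hence $\ker\phi=\langle e^{t}\rangle$; it is one-dimensional and nonzero, so $\phi$ is not injective. Incidentally this makes explicit the remark preceding the lemma: a given $X\in{\rm Ham}(\mathbb{R}^2,\Lambda,R)$ admits the whole affine line of Hamiltonian functions $f+\langle e^{t}\rangle$.

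I do not expect a real obstacle here; once the rectifying coordinates are available the argument is just the computation above. The only delicate point is the bookkeeping of signs for $\widehat\Lambda$ and for the Schouten--Nijenhuis bracket, which I would fix by the conventions already used in the paper ($\widehat\Lambda(\alpha)=\Lambda(\alpha,\cdot)$ and $X_f=[\Lambda,f]_{SN}+fR=\widehat\Lambda({\rm d}f)+fR$); these are the ones behind the Heisenberg computation $X_1^L=\widehat\Lambda_\mathbb{H}(-{\rm d}y)-yR_\mathbb{H}$, against which the displayed expression for $X_f$ can be cross-checked. Since this computation is valid verbatim on every rectifying chart, the non-injectivity of $\phi$ is independent of the chosen chart.
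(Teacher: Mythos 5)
Your proof is correct and follows essentially the same route as the paper: pass to the rectifying coordinates, write out $X_f=\widehat\Lambda({\rm d}f)+fR=(\partial_s f)\partial_t-(\partial_t f)\partial_s+f\partial_s$, and solve $X_f=0$ to get $\partial_s f=0$, $\partial_t f=f$, hence $\ker\phi=\langle e^{t}\rangle$. The extra remarks on sign conventions and on the affine line of Hamiltonian functions are consistent with the paper but do not change the argument.
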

\begin{proof} If $f\in \ker \phi$, then  $\widehat {\Lambda} ({\rm d}f)+fR=0$. In local rectifying coordinates, we get

$
\partial_sf\partial_t-\partial_tf\partial_s+f\partial_s=0\Rightarrow \left\{\begin{array}{c}\partial_sf=0\\\partial_tf=f\end{array}\right.\Rightarrow f=\lambda e^{t},\qquad \lambda\in \mathbb{R}.
$
\end{proof}
\begin{proposition}\label{Prop1} Let $V$ be a VG Lie algebra on $\mathbb{R}^2$ containing $X_1,X_2\in V\backslash\{0\}$ such that
$[X_1,X_2]=X_1$ and $X_1\wedge X_2=0$. Then $V$ does not consist of Hamiltonian vector fields relative to any Jacobi manifold with $R\neq 0$ and $\Lambda\neq 0$.
\end{proposition}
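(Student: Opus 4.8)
The plan is to argue by contradiction and reduce the statement to a one-line computation with the Jacobi bracket in rectifying coordinates. Suppose $V\subset{\rm Ham}(\mathbb{R}^2,\Lambda,R)$ for some Jacobi manifold on the plane with $R$ and $\Lambda$ nowhere vanishing, and pick Hamiltonian functions $f_1,f_2\in C^\infty(\mathbb{R}^2)$ with $X_1=X_{f_1}$ and $X_2=X_{f_2}$. About any point we may pass to local rectifying coordinates $\{s,t\}$, in which $R=\partial_s$ and $\Lambda=\partial_s\wedge\partial_t$, so that every Hamiltonian vector field takes the form $X_f=\widehat\Lambda({\rm d}f)+fR=(\partial_sf)\,\partial_t+(f-\partial_tf)\,\partial_s$ --- precisely the coordinate expression already used in the proof of Lemma~\ref{63}.

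The key step is to read off what $X_1\wedge X_2=0$ imposes on $f_1$ and $f_2$. Wedging the two expressions above gives
$$
X_1\wedge X_2=\bigl[(\partial_sf_1)(f_2-\partial_tf_2)-(\partial_sf_2)(f_1-\partial_tf_1)\bigr]\,\partial_t\wedge\partial_s,
$$
so the hypothesis is equivalent to $(\partial_sf_1)\,\partial_tf_2-(\partial_sf_2)\,\partial_tf_1=(\partial_sf_1)f_2-(\partial_sf_2)f_1$. Substituting this relation into the coordinate form of the Jacobi bracket, $\{f_1,f_2\}_{\Lambda,R}=(\partial_sf_1)\,\partial_tf_2-(\partial_tf_1)\,\partial_sf_2+f_1\,\partial_sf_2-f_2\,\partial_sf_1$, makes every term cancel, so that $\{f_1,f_2\}_{\Lambda,R}=0$ on the chart; since such charts cover $\mathbb{R}^2$ and the bracket is globally defined, $\{f_1,f_2\}_{\Lambda,R}\equiv 0$. (Equivalently, in dimension two one has the pointwise identity $X_{f_1}\wedge X_{f_2}=\{f_1,f_2\}_{\Lambda,R}\,\Lambda$, from which the conclusion is immediate because $\Lambda$ never vanishes.)

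To finish, recall that $\phi_{\Lambda,R}:f\mapsto X_f$ is a Lie algebra morphism, hence $[X_1,X_2]=[X_{f_1},X_{f_2}]=X_{\{f_1,f_2\}_{\Lambda,R}}=X_0=0$, which contradicts $[X_1,X_2]=X_1\neq 0$. So no such Jacobi structure exists. I do not expect a genuine obstacle here: the only point requiring care is the sign and normalization in the formula for $\widehat\Lambda({\rm d}f)$ and in the wedge $X_{f_1}\wedge X_{f_2}$, but this is exactly the convention already fixed around Lemma~\ref{63}, so it can be reused verbatim. It is also worth recording explicitly that ``$R\neq 0$'' and ``$\Lambda\neq 0$'' are meant pointwise, so that rectifying coordinates genuinely exist about every point, in line with the usage in the two preceding lemmas.
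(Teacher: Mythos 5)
Your proof is correct and follows essentially the same route as the paper: pass to local rectifying coordinates, show that $X_1\wedge X_2=0$ forces $\{f_1,f_2\}_{\Lambda,R}=0$, and contradict $[X_1,X_2]=X_1\neq 0$. The only (immaterial) difference is at the end, where the paper routes the contradiction through $\ker\phi_{\Lambda,R}$ (obtaining $h_1+g=0$ and hence $X_1=0$), while you conclude directly that $[X_1,X_2]=X_{\{f_1,f_2\}_{\Lambda,R}}=0$; your identity $X_{f_1}\wedge X_{f_2}=\{f_1,f_2\}_{\Lambda,R}\,\Lambda$ is a clean coordinate-free packaging of the paper's computation.
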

\begin{proof} Assume that $X_1,X_2$ are Hamiltonian. Since $\phi_{\Lambda,R}$ is a Lie algebra morphism, there are $h_1,h_2\in C^\infty(\mathbb{R}^2)\backslash\{0\}$ such that $\{h_1,h_2\}_{\Lambda,R}=h_1+g$, with $g\in \ker \phi$. So,
$$
\{h_1,h_2\}_{\Lambda,R}=\Lambda({\rm d}h_1,{\rm d} h_2)+h_1Rh_2-h_2Rh_1=h_1+g  .
$$
Meanwhile, $X_1\wedge X_2=0$ yields that
$$
\widehat{\Lambda}({\rm d}h_1)\wedge \widehat{\Lambda}({\rm d}h_2)+R\wedge [h_1\widehat {\Lambda}({\rm d}h_2)-h_2\widehat {\Lambda}({\rm d}h_1)]=0.
$$
Using local rectifying coordinates, we see that $\widehat{\Lambda}({\rm d}h_i)=(Rh_i)\partial_t-\partial_th_iR$  and $R\wedge \widehat {\Lambda}({\rm d}h_i)=(Rh_i)\Lambda$ for $i=1,2$. Hence,
$$
\left[(Rh_1)\partial_t-\partial_th_1R\right]\wedge \left[(Rh_2)\partial_t-\partial_th_2R\right] +[h_1(Rh_2)-h_2(Rh_1)]\Lambda =0
$$
and
$$
0=(Rh_1\partial_th_2-Rh_2\partial_th_1)\Lambda +(h_1Rh_2-h_2Rh_1)\Lambda \Leftrightarrow \Lambda({\rm d}h_1,{\rm d}h_2)+h_1Rh_2-h_2Rh_1=0.
$$
This amounts to $\{h_1,h_2\}_{\Lambda,R}=0$, which implies that $0=h_1+g$ and $X_1=0$. This is impossible by assumption and $X_1$ and $X_2$ cannot be Hamiltonian.  
\end{proof}

 \begin{proposition}\label{Prop2} There exists no Jacobi manifold on the plane with $\Lambda\neq 0$ and $R\neq 0$ turning the elements of a Lie algebra diffeomorphic to $V\equiv \langle \partial_x,\partial_y,x\partial_x+\alpha y\partial_y\rangle$ with $\alpha\notin\{0,-1\}$ into Hamiltonian vector fields.
\end{proposition}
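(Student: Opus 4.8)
The plan is to adapt the argument of Proposition \ref{Prop1}: I would pass to local rectifying coordinates and extract a numerical obstruction, which this time is the impossibility of the three prescribed divergences all vanishing against a single volume form.

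I would argue by contradiction. Assume $(\mathbb{R}^2,\Lambda,R)$ is a Jacobi manifold with $\Lambda\neq 0$ and $R\neq 0$ everywhere, and some VG Lie algebra $W$ diffeomorphic to $V$ consists of Hamiltonian vector fields relative to it. Using the diffeomorphism as a change of variables, one may assume $W=V=\langle\partial_x,\partial_y,Z\rangle$ with $Z=x\partial_x+\alpha y\partial_y$, at the price of replacing $(\Lambda,R)$ by its push-forward --- still a Jacobi structure with both tensors nowhere vanishing, so the rectifying-coordinates lemma still applies. Introduce local rectifying coordinates $\{s,t\}$, so $R=\partial_s$ and $\Lambda=\partial_s\wedge\partial_t$. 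As in the proof of Proposition \ref{Prop1}, every Hamiltonian vector field is $X_f=\widehat\Lambda({\rm d}f)+fR=(\partial_sf)\partial_t+(f-\partial_tf)\partial_s$, and a short computation using $\partial_s\partial_tf=\partial_t\partial_sf$ gives $\partial_s\bigl(e^{-t}(f-\partial_tf)\bigr)+\partial_t\bigl(e^{-t}\partial_sf\bigr)=0$; i.e.\ $X_f$ preserves the volume form $\mu:=e^{-t}\,{\rm d}s\wedge{\rm d}t$. (The weight $e^{-t}$ is exactly the factor annihilating the ambiguity $\ker\phi=\langle e^t\rangle$ from Lemma \ref{63}.) Hence $\partial_x,\partial_y$ and $Z$, being Hamiltonian by hypothesis, all preserve $\mu$.

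The contradiction then comes by expressing $\mu=h(x,y)\,{\rm d}x\wedge{\rm d}y$ in the coordinates $\{x,y\}$, with $h$ nowhere zero: invariance of $\mu$ under $\partial_x$ and $\partial_y$ forces $\partial_xh=\partial_yh=0$, so $h$ is a nonzero constant, whereas the divergence of $Z=x\partial_x+\alpha y\partial_y$ with respect to a constant multiple of ${\rm d}x\wedge{\rm d}y$ equals $\partial_x(x)+\partial_y(\alpha y)=1+\alpha\neq 0$ because $\alpha\neq -1$. This is incompatible with $Z$ preserving $\mu$, which completes the proof.

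I expect the only step needing genuine care to be the divergence identity $\partial_s\bigl(e^{-t}(f-\partial_tf)\bigr)+\partial_t\bigl(e^{-t}\partial_sf\bigr)=0$ in rectifying coordinates; everything after that is routine. The same line of thought also dispatches the cases mentioned in the footnote: in the coordinates of Table \ref{table1} the third generator of P$_1$ with $\alpha\neq 0$ has divergence $2\alpha\neq 0$, and the generator $x\partial_x+(ry+x^r)\partial_y$ of I$_{17}$ has divergence $1+r\neq 0$, so the same volume-form obstruction excludes both.
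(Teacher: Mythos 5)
Your argument is correct, but it proceeds along a genuinely different route from the paper's. The key identity checks out: in rectifying coordinates $X_f=(f-\partial_tf)\partial_s+(\partial_sf)\partial_t$, and
$\partial_s\bigl(e^{-t}(f-\partial_tf)\bigr)+\partial_t\bigl(e^{-t}\partial_sf\bigr)=e^{-t}(\partial_sf-\partial_s\partial_tf-\partial_sf+\partial_t\partial_sf)=0$,
so every Hamiltonian vector field of $(\mathbb{R}^2,\partial_s\wedge\partial_t,\partial_s)$ preserves the volume form $e^{-t}\,{\rm d}s\wedge{\rm d}t$; the rest (constancy of the density forced by $\partial_x,\partial_y$, and the nonzero divergence $1+\alpha$ of the third generator) is routine and local, which suffices for the contradiction. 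The paper argues quite differently: it keeps the three Hamiltonian functions $h_1,h_2,h_3$ explicit, uses the functional dependence $X_3=\mu_2X_1+\alpha\mu_1X_2$ together with the kernel $\langle e^t\rangle$ of $\phi_{\Lambda,R}$ from Lemma \ref{63} to pin down $h_1$ and $h_2$ as multiples of $e^t$, and then extracts a compatibility condition $(\alpha+1)\lambda\,(\partial_s\mu_1\partial_t\mu_2-\partial_s\mu_2\partial_t\mu_1)=0$ forcing ${\rm d}\mu_1\wedge{\rm d}\mu_2=0$ and hence $X_1\wedge X_2=0$, a contradiction. Your invariant-volume argument is shorter and more conceptual: it isolates a single obstruction (a distinguished volume form preserved by all Hamiltonian vector fields, with the weight $e^{-t}$ exactly cancelling the $\ker\phi$ ambiguity), it avoids the computations with $\mu_1,\mu_2$ entirely, and it uniformly disposes of P$_1$ with $\alpha\neq 0$ and I$_{17}$, which the paper only handles by asserting in a footnote that a ``trivial modification'' of its Proposition \ref{Prop2} works. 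What the paper's computation buys in exchange is explicit information about the would-be Hamiltonian functions (e.g., that $h_1,h_2$ must be proportional to $e^t$), which is in the spirit of its other classification arguments but is not needed for the nonexistence statement itself.
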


\begin{proof} Let us proceed by reduction to absurd. Assume $(N,\Lambda,R)$ to be a Jacobi manifold and $V\subset {\rm Ham}(N,\Lambda,R)$.  Every Lie algebra diffeomorphic to $V$ can be spanned by some vector fields $X_1,X_2, X_3$
diffeomorphic to $\partial_x,\partial_y,x\partial_x+\alpha y \partial_y$. Then, 
$
[X_1,X_2]=0, \ [X_1,X_3]=X_1, \ [X_2,X_3]=\alpha X_2, \ X_1\wedge X_2\neq 0
$
and
\begin{equation}\label{mu}
X_3=\mu_2X_1+\alpha\mu_1X_2,
\end{equation}
with $X_1\mu_1=X_2\mu_2=0$ and ${\rm d}\mu_1\wedge {\rm d}\mu_2\neq 0$. From \eqref{mu} and since $X_1,X_2,X_3$ are Hamiltonian for certain Hamiltonian functions $h_1,h_2,h_3$, correspondingly, we obtain
$$
\widehat{\Lambda}({\rm d}h_3)+h_3R=\mu_2\widehat{\Lambda}({\rm d}h_1)+\mu_2h_1R+\alpha\mu_1\widehat{\Lambda}({\rm d}h_2)+\alpha\mu_1h_2R
$$
and, by means of the rectified expression for $\Lambda$ and $R$, we get
\begin{equation}\label{sys1}
{\partial_s h_3}=\mu_2{\partial_s h_1}+\alpha\mu_1\partial_s h_2,\qquad \partial_t h_3=\mu_2{\partial_t h_1}+\alpha\mu_1{\partial_t h_2}-\mu_2h_1-\alpha\mu_1h_2+h_3.
\end{equation}

Since $[X_1,X_3]=X_1$, then $\{h_1,h_3\}_{\Lambda,R}=h_1+\lambda_1e^t$, where $e^t$ is a function with zero Hamiltonian vector field and $\lambda_1\in\mathbb{R}$. Hence,
$$
h_1+\lambda_1e^t=\{h_1,h_3\}_{\Lambda,R}=\Lambda({\rm d}h_1,{\rm d}h_3)+h_1(Rh_3)-h_3(Rh_1).
$$
Simplifying and using previous expressions (\ref{sys1}), we find that
$$
h_1+\lambda_1e^t=\mu_1\left(\partial_t h_2\partial_s h_1-\partial_t h_1\partial_s h_2+h_1\partial_s h_2-h_2\partial_s h_1\right)=\alpha\mu_1\{h_1,h_2\}_{\Lambda,R}.
$$
As $[X_1,X_2]=0$, then $\{h_1,h_2\}_{\Lambda,R}=\lambda e^t$ for a certain constant $\lambda\in\mathbb{R}$. Hence,
$
h_1=(\alpha\mu_1\lambda-\lambda_1)e^t
$
and $\lambda\neq0 $. Analogously, $\{h_2,h_3\}_{\Lambda,R}=\alpha(h_2+\lambda_2e^t)$ implies 
$$
\alpha(h_2+\lambda_2e^t)=\mu_2\{h_2,h_1\}_{{\Lambda,R}}\Rightarrow h_2=(-\mu_2\lambda/\alpha-\lambda_2)e^t.
$$

Writing the compatibility condition for the system (\ref{sys1}), we reach to
$$
(\alpha+1)\lambda\left(\partial_s \mu_1\partial_t \mu_2-\partial_s \mu_2\partial_t \mu_1\right)=0.
$$
This implies that ${\rm d}\mu_1\wedge {\rm d}\mu_2=0$. Since $X_1\mu_1=X_2\mu_2=0$, we obtain $X_1\wedge X_2=0$, which is impossible by assumption. This finishes the proof.
\end{proof}

Finally, observe that the elements of I$_{12}$ are Hamiltonian relative to $(\mathbb{R}^2,\partial_x\wedge\partial_y,0)$ and $(\mathbb{R}^2,0,\partial_y)$. In the future we aim to study which VG Lie algebras on $\mathbb{R}^2$ admit a similar property. We also plan to devise more applications of our techniques and to analyze other geometric properties of Jacobi--Lie systems. 

\section*{Acknowledgments}
F.J.~Herranz acknowledges partial financial support by the Spanish MINECO under grant  MTM2013-43820-P and   by Junta de Castilla y Le\'on  under   grant BU278U14.   J.~de Lucas acknowledges funding from the National Science Centre (Poland) under grant HARMONIA
DEC-2012/04/M/ST1/00523. C.~Sard\'on acknowledges a fellowship provided by the University of Salamanca.

\newpage

\end{document}